\documentclass[a4paper, amsfonts, amssymb, amsmath, reprint, showkeys, twoside,superscriptaddress, onecolumn, nofootinbib]{revtex4-2}
\usepackage{amsmath,amsthm}
\usepackage{amssymb}
\usepackage{array}
\usepackage[export]{adjustbox}
\usepackage{rotating}
\usepackage{mathtools}
\usepackage{mathdots}
\usepackage{bm}
\usepackage{dsfont}
\usepackage{xcolor}
\usepackage[caption=false]{subfig}
\usepackage{float}
\makeatletter
\let\newfloat\newfloat@ltx
\makeatother
\usepackage{graphicx}

\usepackage{thmtools}
\usepackage{thm-restate}

\usepackage{mathrsfs}
\usepackage{nccmath}
\usepackage{physics}
\usepackage{comment}
\usepackage[normalem]{ulem}
\usepackage[colorlinks=true, urlcolor=blue, linkcolor=blue, citecolor=blue]{hyperref}
\usepackage[capitalize]{cleveref}
\usepackage[acronym]{glossaries}
\setkeys{glslink}{hyper=false}
\usepackage{makecell}

\usepackage{tikz}
\usetikzlibrary{quantikz2}

\usepackage[compat=0.4]{yquant}
\useyquantlanguage{groups}

\newtheorem{theorem}{Theorem}
\newtheorem{lemma}[theorem]{Lemma}

\usepackage{todonotes}

\newcommand{\bea}{\begin{eqnarray}}
\newcommand{\eea}{\end{eqnarray}}

\newacronym{qft}{QFT}{quantum Fourier transform}
\newacronym{qsp}{QSP}{quantum signal processing}
\newacronym{com}{COM}{centre of mass}
\newacronym[
  plural={DOFs},
  longplural={degrees of freedom}
]{dof}{DOF}{degree of freedom}
\newacronym{lcu}{LCU}{linear combination of unitaries}

\newcommand{\ceil}[1]{\left\lceil #1 \right\rceil}

\newcommand{\arccosh}{\operatorname{arccosh}}
\newcommand{\arcsinh}{\operatorname{arcsinh}}

\begin{document}
\title{Improved constant factors for qubitized Hamiltonian simulation}
\author{Matthew Pocrnic}
\affiliation{Xanadu, Toronto, ON, M5G 2C8, Canada}
\author{Danial Motlagh}
\affiliation{Xanadu, Toronto, ON, M5G 2C8, Canada}
\date{\today}

\begin{abstract}
    Quantum signal processing (QSP) serves as the asymptotically optimal technique for Hamiltonian simulation on a quantum computer. By approximating the time evolution operator via the Jacobi-Anger expansion, the Hamiltonian simulation problem reduces to a problem in polynomial approximation theory: find a sufficient degree-$d$ polynomial series to approximate $e^{-i\tau x}$ on $[-1,1]$ within error $\epsilon$. While $d\in\tilde{\mathcal{O}}(\tau)$ is known to be asymptotically optimal, there exists a gap between state-of-the-art bounds and the optimal constant multiplicative factor, which is approximately equal to 1. Here, we close this gap almost entirely, to the point where possible future improvements will not be of practical significance. Our improvement resides in a careful treatment of the Bessel tail in the Jacobi-Anger series using Kapteyn's and Watson's inequalities, thereby reducing the overhead estimates for all Hamiltonian simulation tasks on quantum computers by a factor of $\approx e/2$.
\end{abstract}
\maketitle

\section{Introduction}\label{sec:intro}
Hamiltonian simulation stands as one of the most promising applications of quantum computers, with quantum signal processing (QSP) serving as the asymptotically optimal algorithm, achieving a linear scaling in evolution time $t$. Not only does it enable the study of quantum dynamics \cite{low2017optimal, mukhopadhyay2024quantum, lang2026quantum, motlagh2025quantum, stetina2025first}, it serves as a subroutine for a host of algorithmic tasks such as matrix inversion \cite{morales2026quantum, jennings2023efficient}, phase estimation \cite{loaiza2025nonlinear}, solving differential equations \cite{low2025optimal, li2026quantum, an2023linear, penuel2024detailed}, and time evolving under Lindbladians \cite{pocrnic2025quantum, garg2025simulating, ding2024simulating}.
Hamiltonian simulation via QSP involves building an $\epsilon$-accurate polynomial approximation to the time-evolution operator, $e^{iHt}$, via the Jacobi-Anger expansion. This is achieved by repeated calls to the qubitized walk operator $W = e^{\pm i \arccos(H)}$ constructed from a block-encoding of the Hamiltonian
\begin{equation}
    \|e^{iHt} - \sum_{n=-d}^{d} i^n J_n(t) \,W^n \| \leq \epsilon,
\end{equation}
where $J_n(t)$ is the $n^{th}$ Bessel function of the first kind and $d\in\mathcal{O}(t + \log(1/\epsilon))$. The original construction \cite{low2017optimal} achieves this using $2d$ calls to $W$. This was further improved to $d+2$ with the advent of generalized quantum signal processing (GQSP) \cite{motlagh2024generalized, berry2024doubling}, thereby halving the number of calls to $W$. This reduced the multiplicative constant factor in the number of calls to $W$ as a function of $d$ to $1$. However, the multiplicative constant factor in $d$ as a function of $t$ has remained suboptimal. The first constant factor bound we are aware of comes from Ref. \cite{gilyen2019quantum}, which introduced the Quantum Singular Value Transform. This work bounded the truncation degree by $d = \ceil{2\alpha t + 3\ln(12/\epsilon)}$ (Corollary 62), where $\alpha\geq \|H\|$ is the 1-norm of the Hamiltonian induced by the block encoding. This was later improved in Ref. \cite{jennings2023efficient} (version 1, Lemma 5) to $d=\ceil{\frac{e}{2}\alpha t +  \ln \left (\frac{2c}{\epsilon}\right ) }$, with $c=4(\sqrt{2\pi}e^{1/13})^{-1} \approx 1.47762$, via a more careful treatment involving the Lambert-$W$ function. In practical problems $\alpha t \gg \ln(1/\epsilon)$, therefore finding better constant factors multiplying this term results in immediate algorithmic improvements. Numerical computation of the error contributions from Bessel tail bound suggest a leading factor of approximately $1$ should be possible, but has remained elusive thus far. \\

 In this work, we close this gap by proving constant factor bounds that approach a leading constant of $1$ in practical parameter regimes, thereby reducing the overhead estimates for all Hamiltonian simulation tasks on quantum computers by a factor of $\approx e/2$. We achieve this improvement by deriving two bounds using each of Kapteyn's inequality \cite{reif2024lower} and Watson's inequality \cite{watson1922treatise} to bound the Bessel tail in the Jacobi-Anger series by a majorant power series. We further demonstrate the tightness of our bounds in \cref{fig:degree}, by showing that our analytical bound rapidly converges to the numerically computed magnitude of the true Bessel tail. When combined with the results of GQSP \cite{motlagh2024generalized}, this implies that the multiplicative constant factor in the number of calls to $W$ as a function of $t$ for implementing the time-evolution operator, $e^{iHt}$, is effectively equal to $1$.

\section{Main Result}
The main result of this work is a rigorous constant factor bound on the number of queries to the walk operator $W = e^{\pm i\, \arccos(H/\alpha)}$ required to approximate the time evolution operator $e^{-iHt}$ within error $\epsilon$. The standard approach in the literature is to use the Jacobi-Anger series
\begin{equation}
    e^{ix\cos(\theta)} = \sum_{k=-\infty}^\infty i^k J_k(x)e^{ik\theta}, 
\end{equation}
where $e^{ik\theta}$ is the scalar analog of the walk operator, and $J_k(x)$ are Bessel functions of the first kind. The Bessel property $J_{-k}(x) = (-1)^kJ_k(x)$ further implies the expansion 
\begin{equation}
    e^{ix\cos(\theta)} = J_0(x)+2\sum_{k=1}^\infty i^k J_k(x)\cos(k\theta).
\end{equation}
Written in the walk operator language, $e^{i\theta}\to e^{i\arccos(H/\alpha)}$, we obtain 
\begin{equation}
    e^{itH} = J_0(\alpha t)+2\sum_{k=1}^\infty i^k J_k(\alpha t)T_k(H/\alpha),
\end{equation}
where $\alpha \geq \|H\|$ is the block encoding 1-norm. To implement an $\epsilon$-precise finite polynomial series approximation, we need to truncate the result on the right hand side at some target degree $d$, which will then dictate the number of queries to the walk operator $W$ and by extension, to the block encoding $U_H$ \cite{berry2024doubling}.
Our resulting bounds on block encoding queries are therefore corollaries to Lemmas \ref{lem:degree} and \ref{lem:watson_degree}, which bound the sufficient choice of truncation degree in the Jacobi-Anger series. 

\begin{figure}[!t]
    \centering
    \includegraphics[width=0.85\textwidth]{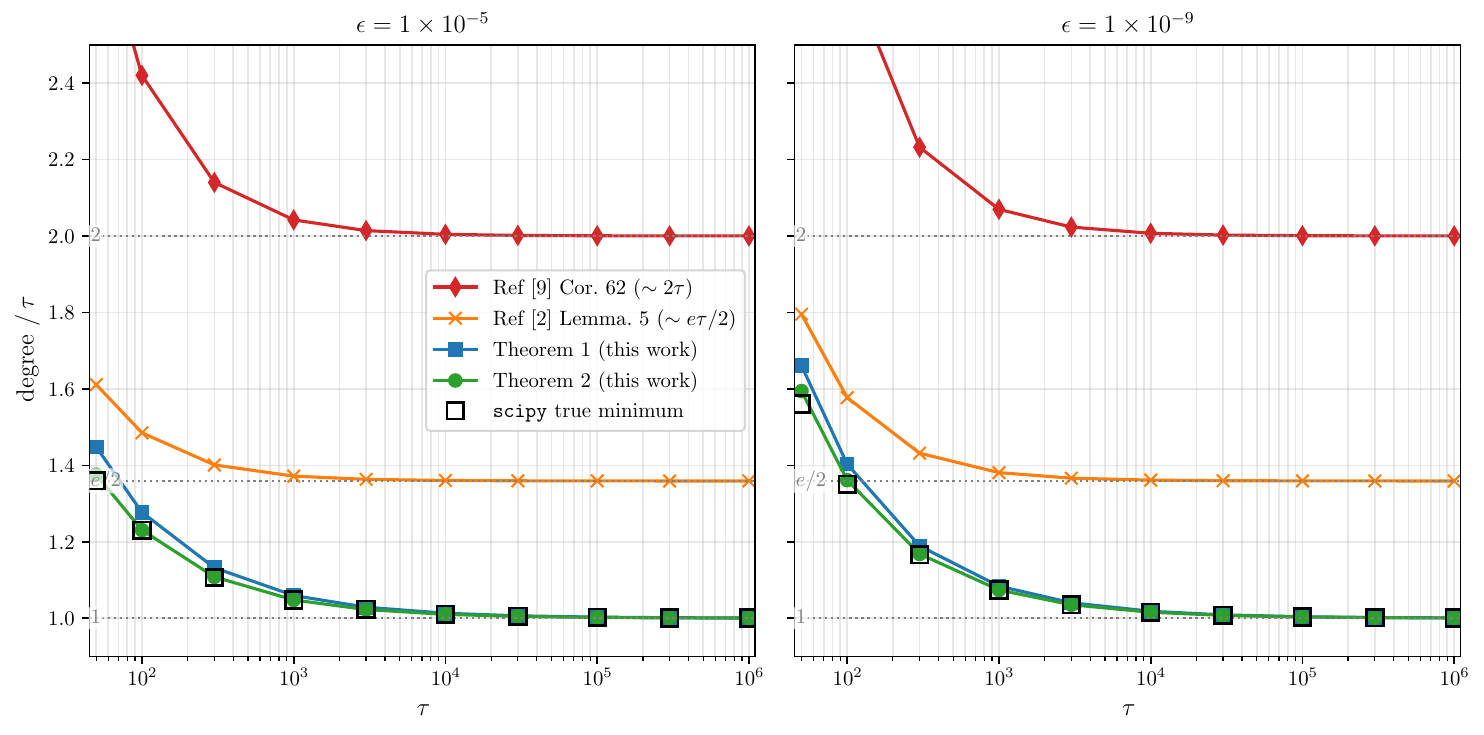}
    \caption{Comparison of degree bounds with $\tau:=\alpha t$ from Theorems \ref{thm:hamsim} and \ref{thm:hamsim2} to the exact Bessel tail, Corollary 62 from Ref. \cite{gilyen2019quantum}, and Lemma 5 from \cite{jennings2023efficient} version 1. We plot the exact Bessel tail to machine precision via the clear boxes using SciPy \cite{virtanen2020scipy} to show that the bound is tight and to further illustrate that the gap to the optimal result becomes negligible with increasing $\tau$. To contextualize the results, for the simulation of modestly sized organic chemical reactions in first quantization, the product $\alpha t$ approximately satisfies $\alpha t \approx 10^4 \text{-}10^6$ \cite{pocrnic2026efficient}.}
    \label{fig:degree}
\end{figure}

\begin{theorem}[Walk operator queries] \label{thm:hamsim} Given access to $U_H$, a $(\alpha, m, 0)$ block encoding of a Hamiltonian $H\in \mathbb{C}^{2^n\times 2^n}$, it is possible to construct a $(1, m+2, \epsilon)$ block encoding of $e^{-iHt}$ using $d+2$ queries to $U_H$ or $U_H^\dagger$ where 
\begin{align}
    d &= \left\lceil \alpha t \cosh \left ( \left [\frac{3}{\alpha t}\ln\left (\frac{2}{\epsilon (1-e^{-\theta_{0}(\alpha t, \epsilon)})}\right )\right ]^{1/3} \right )\right \rceil \\ &= \alpha t +\frac{(\alpha t)^{1/3}}{2}\left(3\ln \left (\frac{2}{\epsilon(1-e^{-\theta_0})}\right) \right )^{2/3} + \tilde{O}\left (\frac{1}{(\alpha t)^{1/3}} \right),
\end{align}
 and $\theta_{0}(\alpha t, \epsilon) := \left [\frac{3}{\alpha t}\ln\left (\frac{2}{\epsilon }\right )\right ]^{1/3}$. This holds $\forall \, d >\alpha t$, $t \in \mathbb{R_+}$.
\end{theorem}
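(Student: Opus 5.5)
The theorem combines two ingredients. The first is a truncation bound for the Jacobi--Anger series: this is Lemma~\ref{lem:degree}, which I take to be the degree lemma proved via Kapteyn's inequality. The second is the GQSP construction of \cite{motlagh2024generalized,berry2024doubling}, which implements a degree-$d$ Laurent polynomial in $W$ using $d+2$ queries to $U_H$ or $U_H^\dagger$. With $\tau:=\alpha t$, it suffices to show that
\begin{equation}
\Big|e^{-i\tau\cos\theta}-\sum_{k=-d}^{d}(-i)^kJ_k(\tau)e^{ik\theta}\Big|\le 2\sum_{k>d}|J_k(\tau)|\le\epsilon
\end{equation}
for the stated $d$. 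GQSP then yields a $(1,m+2,\epsilon)$ block encoding of $e^{-iHt}$. The two extra ancillas are the qubitization ancilla and the GQSP signal qubit. The polynomial is bounded by $1$ on the circle up to $\epsilon$, so a standard rescaling by $(1+\epsilon)$ or an $\epsilon/2$ split absorbs normalization.

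For the tail I will use Kapteyn's inequality. For $k\ge\tau$, write $k=\tau\cosh\beta_k$. Kapteyn gives
\begin{equation}
|J_k(\tau)|\le \exp\!\big(-k(\beta_k-\tanh\beta_k)\big)=\exp\!\big(-\tau(\beta_k\cosh\beta_k-\sinh\beta_k)\big).
\end{equation}
I will lower-bound the exponent with the elementary inequality $\beta\cosh\beta-\sinh\beta\ge \beta^3/3$, which holds because the Taylor series has nonnegative coefficients $\beta^3/3+\beta^5/30+\dots$. This gives $|J_k(\tau)|\le e^{-\tau\beta_k^3/3}$.

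Next I compare the tail to a geometric majorant. Set $\theta:=\beta_{d}$, so that $d\approx\tau\cosh\theta$. The function $\phi(\beta)=\tau(\beta\cosh\beta-\sinh\beta)$ has derivative $\tau\beta\sinh\beta$, and $d\beta_k/dk = 1/(\tau\sinh\beta_k)$. Hence the exponent increases in $k$ by at least $\beta_k\ge\theta$ per unit step. It follows that
\begin{equation}
\sum_{k>d}|J_k(\tau)|\le e^{-\tau\theta^3/3}\sum_{j\ge0}e^{-j\theta}=\frac{e^{-\tau\theta^3/3}}{1-e^{-\theta}}.
\end{equation}
Requiring $2e^{-\tau\theta^3/3}/(1-e^{-\theta})\le\epsilon$ leads to the implicit condition $\theta=[\frac{3}{\tau}\ln\frac{2}{\epsilon(1-e^{-\theta})}]^{1/3}$.

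I expect the main obstacle to be the self-referential dependence on $\theta$. To resolve it I will use monotonicity. Since $1-e^{-\theta}<1$, any admissible $\theta$ satisfies $\theta\ge\theta_0=[\frac3\tau\ln\frac2\epsilon]^{1/3}$. Replacing $1-e^{-\theta}$ by the smaller quantity $1-e^{-\theta_0}$ in the logarithm gives an explicit $\theta_1\ge\theta_0$. With this choice, $1-e^{-\theta_1}\ge 1-e^{-\theta_0}$, so
\begin{equation}
\frac{2e^{-\tau\theta_1^3/3}}{1-e^{-\theta_1}}\le \epsilon\,\frac{1-e^{-\theta_0}}{1-e^{-\theta_1}}\le\epsilon.
\end{equation}
I then set $d=\lceil\tau\cosh\theta_1\rceil$. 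Rounding up only increases $\beta_d$, which is harmless, and the condition $d>\tau$ guarantees that Kapteyn's regime applies for all $k>d$. The asymptotic expansion follows from $\cosh x=1+x^2/2+O(x^4)$ with $x=\theta_1=O(\tau^{-1/3})$. This gives $\tau+\frac{\tau^{1/3}}{2}(3\ln\frac{2}{\epsilon(1-e^{-\theta_0})})^{2/3}+\tilde O(\tau^{-1/3})$.
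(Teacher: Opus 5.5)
Your proposal is correct and follows the paper's proof essentially step for step. The shared steps are: Kapteyn's inequality in the hyperbolic form $e^{-\tau(\beta\cosh\beta-\sinh\beta)}$; a geometric majorant from the per-step exponent increase (your $d\phi/dk=\beta_k$ is the paper's $\int_k^{k+1}\arccosh(s/\tau)\,ds\ge\arccosh(d/\tau)$); the bound $\theta\cosh\theta-\sinh\theta\ge\theta^3/3$; a single fixed-point iteration from $\theta_0$; and GQSP via \cite{berry2024doubling}. Your direct check that $\theta_1$ works because $1-e^{-\theta_1}\ge 1-e^{-\theta_0}$ is a slightly cleaner substitute for the paper's fixed-point monotonicity argument; note, though, that either normalization fix you mention costs a factor of two in $\epsilon$ (turning $2/\epsilon$ into $4/\epsilon$ in the formula), a detail the paper itself leaves to \cite{berry2024doubling}.
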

\noindent
Therefore, in the large $\alpha t$ regime, the leading constant asymptotically converges to 1. This is numerically verified in Figure \ref{fig:degree}. Specifically we plot $d(\tau, \epsilon)/\tau$, and show that for our bound this approaches 1 in the practical $\tau$ regime. Our proof of the main result relies on Lemma \ref{lem:error} and Lemma \ref{lem:degree}, which we prove in Section \ref{sec:proofs}. This result is effectively optimal and the practical bound for most applications, speeding up a variety of works based on qubitized dynamics \cite{pocrnic2026efficient, eklund2026end, da2025comprehensive, pocrnic2025constant, rubin2024quantum, kharazi2026quantum}. \\

We note that the leading order 1 is also obtained via an asymptotic formula for the Bessel tail based on the Airy function in Ref. \cite{babbush2019quantum}, however, while it is predictive for large $\alpha t$ (matching numerical behaviour in this regime), it is not a rigorous bound on the polynomial degree, and fails to accurately describe the behaviour for smaller values of $\alpha t$. On this note, the result claimed in Theorem \ref{thm:hamsim} is loose by a logarithmic factor which can become notable in this regime. \\

Given the large effort in the field to reduce $\alpha$ over the years \cite{dutkiewicz2026spectral, loaiza2023block, king2026quantum, lee2021even}, in addition to applications that simulate the qubitized dynamics for a short time-step $\Delta t$ as in Ref. \cite{haah2021quantum}, which uses qubitization to build the exponentials in subsequently constructed product formulae, it is of interest to tightly bound the entire range. By employing Watson's inequality \cite{watson1922treatise}, the bound based on Kapetyn is further tightened in the Theorem \ref{thm:hamsim2}, which follows from Lemma \ref{lem:watson_degree} proven in Section \ref{sec:proofs}. As an added bonus, this rigorous bound is tighter than the asymptotic expression based on the Airy function derived in Ref. \cite{babbush2019quantum}.

\begin{theorem}[Tightened walk operator queries] \label{thm:hamsim2} Given access to $U_H$, a $(\alpha, m, 0)$ block encoding of a Hamiltonian $H\in \mathbb{C}^{2^n\times 2^n}$, it is possible to construct a $(1, m+2, \epsilon)$ block encoding of $e^{-iHt}$ using $d+2$ queries to $U_H$ or $U_H^\dagger$ where 
\begin{equation}
    d(\tau, \epsilon)= \left\lceil \tau \cosh \left (\left [\frac{3}{\tau}\ln\left (\frac{2}{\epsilon (1-e^{-\theta_{\rm lo}}) \sqrt{2\pi \tau \sinh(\theta_{\rm lo})}}\right )\right ]^{1/3} \right )\right \rceil,
\end{equation}
 given we define $\theta_{0} := \left [\frac{3}{\alpha t}\ln\left (\frac{2}{\epsilon }\right )\right ]^{1/3}$, $\theta_{\rm up} := \left[\frac{3}{\tau}\ln\left(\frac{2}{\epsilon(1-e^{-\theta_0})}\right)\right]^{1/3}$, and $\theta_{\rm lo} := \left [\frac{3}{\tau}\ln\left (\frac{2}{\epsilon (1-e^{-\theta_{\rm up}}) \sqrt{2\pi \tau \sinh(\theta_{\rm up})}}\right )\right ]^{1/3}$, and satisfy $d\geq \sqrt{(\alpha t)^2+1/4\pi^2}$ with $t\in \mathbb{R}_+$.
\end{theorem}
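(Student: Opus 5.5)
The plan is to reduce the statement to a bound on the Bessel tail and then invoke GQSP. Truncating the Jacobi--Anger series at degree $d$, the operator-norm error on $[-1,1]$ is at most $2\sum_{k>d}|J_k(\tau)|$ with $\tau=\alpha t$, since $|T_k(x)|\le 1$. It therefore suffices to show that $2\sum_{k=d+1}^\infty |J_k(\tau)|\le\epsilon$ for the stated $d$. Once this holds, GQSP \cite{motlagh2024generalized,berry2024doubling} implements the truncated Laurent polynomial $\sum_{|k|\le d} i^kJ_k(\tau)W^k$ with $d+2$ queries to $U_H$ or $U_H^\dagger$ and two extra ancillas. This yields the $(1,m+2,\epsilon)$ block encoding, exactly as in Theorem~\ref{thm:hamsim}. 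Up to sign conventions on $t$, handled by conjugation, the whole content is Lemma~\ref{lem:watson_degree}, which I would prove as follows.

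The first step bounds each term using Watson's inequality in the Debye parametrization. For $k>\tau$, write $k=\tau\cosh\theta_k$. Watson's inequality gives
\[
|J_k(\tau)|\le \frac{e^{-k(\theta_k-\tanh\theta_k)}}{\sqrt{2\pi k\tanh\theta_k}} = \frac{e^{-\tau(\theta_k\cosh\theta_k-\sinh\theta_k)}}{\sqrt{2\pi\tau\sinh\theta_k}}.
\]
Because $\theta_k\ge\theta_{d}$, the prefactor $1/\sqrt{2\pi\tau\sinh\theta_k}$ can be frozen at $\theta_d$. The condition $d\ge\sqrt{\tau^2+1/4\pi^2}$ gives $\tau\sinh\theta_d\ge 1/(2\pi)$, so the prefactor is at most $1$; I expect this is where that hypothesis enters. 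Next, I would use $\theta\cosh\theta-\sinh\theta\ge\theta^3/3$, which follows from the series comparison $\sum \theta^{2n+1}\left[\frac{1}{(2n)!}-\frac{1}{(2n+1)!}\right]$. I would also use the convexity fact that the exponent increases by at least $\theta_d$ per unit step in $k$: the derivative in $k$ of $k(\theta-\tanh\theta)$ is $\theta_k$. Together these give a geometric majorant,
\[
\sum_{k>d}|J_k(\tau)|\le \frac{e^{-\tau\theta_d^3/3}}{\sqrt{2\pi\tau\sinh\theta_d}\,(1-e^{-\theta_d})}.
\]

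The second step inverts this. Setting the right-hand side to at most $\epsilon/2$ requires $\theta_d\ge\left[\frac{3}{\tau}\ln\frac{2}{\epsilon(1-e^{-\theta_d})\sqrt{2\pi\tau\sinh\theta_d}}\right]^{1/3}$. This is an implicit inequality, because $\theta_d$ appears on both sides. The right side is decreasing in $\theta_d$, so any lower estimate $\theta_d\ge\theta_*$ can be substituted into it to produce a valid sufficient condition. The bootstrap then runs as follows:
\begin{itemize}
\item $\theta_0$ is the crude requirement with both correction factors dropped.
\item $\theta_{\rm up}$ adds the geometric factor, as in Lemma~\ref{lem:degree}; it is the Kapteyn-based requirement and is $\ge$ the needed $\theta$.
\item $\theta_{\rm lo}$ then feeds $\theta_{\rm up}$ into both factors.
\end{itemize}

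The main obstacle is making this monotonicity argument consistent, because the two correction factors pull in opposite directions. The factor $1/(1-e^{-\theta})$ is decreasing in $\theta$, so substituting a smaller $\theta$ is safe for it. The Watson factor $1/\sqrt{\sinh\theta}$ is also decreasing, but $\theta_{\rm up}$ is larger than the true threshold, which is not obviously safe. I would first try to prove the ordering $\theta_{\rm lo}\le\theta_d$ by directly checking that the chosen $d$ satisfies $\theta_d\ge\theta_{\rm lo}$, and then verify that the fixed-point map $g(\theta)=\left[\frac{3}{\tau}\ln\frac{2}{\epsilon(1-e^{-\theta})\sqrt{2\pi\tau\sinh\theta}}\right]^{1/3}$ satisfies $g(\theta_d)\le\theta_d$. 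Since $g$ is decreasing and $\theta_d=g(\theta_{\rm up})$ up to the ceiling, this amounts to showing $\theta_d\le\theta_{\rm up}$, and that is what the comparison with the Kapteyn bound of Lemma~\ref{lem:degree} supplies. If it fails, I would fall back on the bound obtained by evaluating both factors at $\theta_d$ itself and iterating $g$ once more. Finally, I would take $d=\lceil\tau\cosh\theta_{\rm lo}\rceil$, noting that rounding up only increases $\theta_d$.
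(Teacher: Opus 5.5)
Your first step (Watson term-wise, the prefactor frozen at $\theta_d$, the geometric majorant with ratio $e^{-\theta_d}$, and $\theta\cosh\theta-\sinh\theta\ge\theta^3/3$) is the paper's. The inversion step, however, contains a genuine error.

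First, you have misread the stated degree. Write $g(\theta)=\bigl[\tfrac{3}{\tau}\ln\tfrac{2}{\epsilon(1-e^{-\theta})\sqrt{2\pi\tau\sinh\theta}}\bigr]^{1/3}$. The statement defines $\theta_{\rm lo}=g(\theta_{\rm up})$, and the argument of $\cosh$ in $d$ is $g(\theta_{\rm lo})$, not $\theta_{\rm lo}$. So the theorem's degree is $\lceil\tau\cosh g(\theta_{\rm lo})\rceil$, one more application of $g$ than your final choice $\lceil\tau\cosh\theta_{\rm lo}\rceil$.

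Second, your planned verification runs the monotonicity backwards. With $g$ decreasing and $\theta_d=g(\theta_{\rm up})$, showing $\theta_d\le\theta_{\rm up}$ gives $g(\theta_d)\ge g(\theta_{\rm up})=\theta_d$. That is the opposite of the needed $g(\theta_d)\le\theta_d$. This cannot be repaired, because $\theta_{\rm lo}$ really is too small. Let $\phi^\star$ be the unique fixed point of $g$. Since $g$ is strictly decreasing, the set of $\theta$ with $\theta\ge g(\theta)$ is exactly $[\phi^\star,\infty)$. Because $\theta_{\rm up}\ge\phi^\star$, you get $\theta_{\rm lo}=g(\theta_{\rm up})\le g(\phi^\star)=\phi^\star$, strictly unless $\theta_{\rm up}=\phi^\star$. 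So $\theta_d=\theta_{\rm lo}$ violates the sufficient condition, and the ceiling can rescue it only by accident. Also, the two correction factors do not pull in opposite directions: both are decreasing in $\theta$, and all that matters is which side of $\phi^\star$ the input lies on.

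The missing idea, which is exactly the paper's argument, is that a decreasing map swaps the two sides of its fixed point. Let $f$ and $\theta^\star$ be as in the proof of Lemma~\ref{lem:degree}.
Since $\theta_0=\inf f\le\theta^\star$, you have $\theta_{\rm up}=f(\theta_0)\ge f(\theta^\star)=\theta^\star$, hence $\theta_{\rm up}\ge f(\theta_{\rm up})\ge g(\theta_{\rm up})$. The last step uses $g\le f$ wherever $2\pi\tau\sinh\theta\ge1$; this is your prefactor-at-most-one observation, which the paper ties to the hypothesis on $d$. Therefore $\theta_{\rm up}\ge\phi^\star$ is an \emph{upper} estimate of the threshold, and $\theta_{\rm lo}=g(\theta_{\rm up})\le\phi^\star$ is a \emph{lower} estimate. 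Applying $g$ once more gives $g(\theta_{\rm lo})\ge g(\phi^\star)=\phi^\star$. Every $\theta\ge\phi^\star$ satisfies $\theta\ge\phi^\star=g(\phi^\star)\ge g(\theta)$, so $\theta_d\ge g(\theta_{\rm lo})$ suffices, which is precisely the stated $d$. Your fallback of ``iterating $g$ once more'' is this argument. It has to be the main line, with $\theta_{\rm lo}$ treated as a lower bound on the threshold rather than as an admissible choice.
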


\noindent This bound more closely tracts the error over the entire range as show in Figure \ref{fig:degree}. 

\section{Proofs} \label{sec:proofs}
In this section we will use $\tau:=\alpha t$ for compactness. In order to prove the main result, we rely on a tail bound on an infinite series of Bessel functions. To bound a single Bessel function we make use of Kapteyn's inequality \cite{reif2024lower}:
\begin{equation}
|J_k(\tau)|\ \le\
\left(\frac{(\tau/k)\,e^{\sqrt{1-(\tau/k)^2}}}
{1+\sqrt{1-(\tau/k)^2}}\right)^{k}, \: \: \: (0\leq \tau \leq k).
\label{eq:kapteyn}
\end{equation}
This inequality specifically bounds the Bessel tail in the regime of our Jacobi-Anger tail bound, in other words, where the degree $k\geq \tau$. We bound said tail sum of Bessel functions in the following Lemma.

\begin{lemma}[Jacobi-Anger Truncation Error] \label{lem:error}
    Given the following definition of the truncation error $\mathcal T_\epsilon(\tau, d)$
    \begin{equation} \label{eq:max_dev}
        \mathcal T_\epsilon(\tau, d) := \max_{x\in [-1,1]}\left |e^{i\tau x} - J_0(\tau) - 2\sum_{k=1}^{d-1}i^k J_k(\tau) T_k(x) \right |,
    \end{equation}
    with $d \in \mathbb{Z}_+$ and $\tau \in \mathbb{R}_+$, then the following bound holds $\forall \,\tau < d$
    \begin{equation} \label{eq:transcendental}
        \mathcal T_\epsilon(\tau, d) \leq 2\frac{e^{-d \, \arccosh(d/\tau)+\sqrt{d^2 - \tau^2}}}{1-e^{- \arccosh \left (\frac{d}{\tau}\right)}}.
    \end{equation}
\end{lemma}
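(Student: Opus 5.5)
The plan is to reduce the truncation error to a tail sum of Bessel magnitudes and then dominate that tail by a geometric series using Kapteyn's inequality \eqref{eq:kapteyn}.

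\textbf{Step 1: reduce to a Bessel tail.} Set $x=\cos\theta$ in the Jacobi-Anger expansion, so $T_k(x)=\cos(k\theta)$ and $e^{i\tau x}=J_0(\tau)+2\sum_{k\ge1}i^kJ_k(\tau)T_k(x)$. Since $|T_k(x)|\le 1$ on $[-1,1]$,
\begin{equation*}
\mathcal T_\epsilon(\tau,d)\le 2\sum_{k=d}^\infty |J_k(\tau)|.
\end{equation*}
Every index satisfies $k\ge d>\tau$, so Kapteyn's inequality \eqref{eq:kapteyn} applies to each term.

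\textbf{Step 2: rewrite Kapteyn's bound.} Parametrise $k/\tau=\cosh u_k$ with $u_k=\arccosh(k/\tau)>0$. Then $\sqrt{1-(\tau/k)^2}=\tanh u_k$, and
\begin{equation*}
\frac{\tau/k}{1+\tanh u_k}=\frac{1}{\cosh u_k+\sinh u_k}=e^{-u_k}.
\end{equation*}
Hence Kapteyn's bound reads $|J_k(\tau)|\le e^{k(\tanh u_k-u_k)}=e^{-g(k)}$, where
\begin{equation*}
g(k):=k\,\arccosh(k/\tau)-\sqrt{k^2-\tau^2}.
\end{equation*}
At $k=d$ this is exactly the exponent in the numerator of \eqref{eq:transcendental}.

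\textbf{Step 3: geometric domination.} Treat $k$ as a continuous variable on $(\tau,\infty)$. We have
\begin{equation*}
g'(k)=\arccosh(k/\tau)+\frac{k}{\sqrt{k^2-\tau^2}}-\frac{k}{\sqrt{k^2-\tau^2}}=\arccosh(k/\tau).
\end{equation*}
This derivative is increasing in $k$. For $k\ge d$, the mean value theorem therefore gives
\begin{equation*}
g(k)-g(d)\ge (k-d)\,\arccosh(d/\tau).
\end{equation*}
Writing $r:=e^{-\arccosh(d/\tau)}<1$, it follows that $|J_k(\tau)|\le e^{-g(d)}\,r^{\,k-d}$. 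Summing the geometric series,
\begin{equation*}
2\sum_{k\ge d}|J_k(\tau)|\le \frac{2\,e^{-g(d)}}{1-r},
\end{equation*}
which is exactly \eqref{eq:transcendental}.

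\textbf{Main obstacle.} The delicate step is Step 3: we need a ratio bound uniform in $k$, not just an asymptotic one. The convexity of $g$, equivalently $g'=\arccosh(k/\tau)$ being monotone, is what makes the ratio at the first term $k=d$ the worst case. The remaining steps are algebraic simplification of Kapteyn's expression via the hyperbolic substitution.
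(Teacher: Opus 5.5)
Your proof is correct and follows the paper's route. You reduce to $2\sum_{k\ge d}|J_k(\tau)|$, rewrite Kapteyn's bound as $e^{-g(k)}$ with $g(k)=k\arccosh(k/\tau)-\sqrt{k^2-\tau^2}$, and dominate the tail by a geometric series with ratio $e^{-\arccosh(d/\tau)}$. The only cosmetic difference is that the paper writes $g(k)=\int_\tau^k\arccosh(s/\tau)\,\mathrm{d}s$ and bounds the consecutive ratios $K_\tau(k+1)/K_\tau(k)$, whereas you use $g'(k)=\arccosh(k/\tau)$ with the mean value theorem, which is the same monotonicity argument.
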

\begin{proof}
Our starting point for analyzing the necessary polynomial degree is the following:
\begin{align}
    \max_{x\in [-1,1]}\left |e^{i\tau x} - J_0(\tau) - 2\sum_{k=1}^{d-1} i^k J_k(\tau) T_k(x) \right | &\leq 2\max_{x\in [-1,1]} \sum_{k=d}^\infty |J_k(\tau)T_k(x)| \\
    &\leq 2\sum_{k=d}^\infty |J_k(\tau)|,
\end{align}
obtained using the fact that the Chebyshev polynomials are bounded between $[-1,1]$ on the interval. Now since we know the polynomial degree $d$ is necessarily greater than $\tau$, we apply Kapteyn's inequality from Equation \eqref{eq:kapteyn}
\begin{align}
    2\sum_{k=d}^\infty|J_k(\tau)| \leq 2\sum_{k=d}^\infty \left(\frac{(\tau/k)\,e^{\sqrt{1-(\tau/k)^2}}}
{1+\sqrt{1-(\tau/k)^2}}\right)^{k},
\end{align}
and we are tasked with bounding the tail of the infinite series. This task becomes more amenable by transforming to a hyperbolic form:
\begin{align}
\left(\frac{e^{\sqrt{1-(\tau/k)^2}}}
{(k/\tau)+\sqrt{(k/\tau)^2-1}\,}\right)^{k} 
&= \frac{e^{\sqrt{k^2-\tau^2}}}{e^{k \arccosh(k/\tau)}}\\
&= \exp \left (-k \arccosh(k/\tau) + \sqrt{k^2-\tau^2}\right )\\
&= \exp\left (- \int_\tau^k  \arccosh(s/\tau)\, \mathrm{d}s \right ) ,
\end{align}
where the final line comes from the definition of the hyperbolic cosine integral and can be verified by direct computation. Substituting this into our tail bound leaves us with 

\begin{align} \label{eq:kapetyn_integral}
    2\sum_{k=d}^\infty |J_k(\tau)| \leq 2\sum_{k=d}^\infty \exp\left (- \int_\tau^k  \arccosh(s/\tau)\, \mathrm{d}s \right ) := 2\sum_{k=d}^\infty K_\tau (k). 
\end{align}
We proceed by using the standard technique of bounding by a majorant power series.
\begin{align}
    \frac{K_\tau(k+1)}{K_\tau(k)} &= \exp \left (- \int_\tau^{k+1}  \arccosh(s/\tau)\, \mathrm{d}s + \int_\tau^k  \arccosh(s'/\tau)\, \mathrm{d}s' \right ) \\
    & = \exp \left (- \int_k^{k+1}  \arccosh(s/\tau)\, \mathrm{d}s \right ) \\
    & \leq \exp \left (- \arccosh \left (\frac{k}{\tau}\right) \right ) \\
    & \leq \exp \left (- \arccosh \left (\frac{d}{\tau}\right) \right ).
\end{align}
In the second to last line, we take advantage of the fact that $\arccosh(x)$ is monotonically increasing, and lower-bound the integral using its minimum value over the interval. This lower bound paired with that fact that the function is an argument of the negative exponential yields an upper-bound. Using this maximum decay ratio of the terms, we can bound the tail of the Bessel sum with a majorant power series 
\begin{align}
    2\sum_{k=d}^\infty |J_k(\tau)| &\leq 2 \sum_{k=d}^\infty \left(\frac{(\tau/k)\,e^{\sqrt{1-(\tau/k)^2}}}
{1+\sqrt{1-(\tau/k)^2}}\right)^{k} \\
& \leq 2 \,K_\tau(d) \sum_{k=d}^\infty\exp \left (- \arccosh \left (\frac{d}{\tau}\right) \right )^{k-d} \\
&= 2\frac{e^{-\int_\tau^d \arccosh(s/\tau)\mathrm{d}s}}{1-e^{{- \arccosh \left (\frac{d}{\tau}\right)}}} \\
&= 2\frac{e^{-d \, \arccosh(d/\tau)+\sqrt{d^2 - \tau^2}}}{1-e^{- \arccosh \left (\frac{d}{\tau}\right)}},
\end{align}
where in the last two lines we sum the geometric series and evaluate the integral. 
\end{proof}

The result of Lemma \ref{lem:error} is a transcendental equation that has no known solutions. To find the polynomial degree we can either solve this equation numerically, or can further linearize the bound into a soluble expression. Since we are interested in a rigorous bound on the degree, we perform the latter. By linearizing the bound, we will obtain an analytic expression for $d(\tau, \epsilon)$ which is necessary to prove the main result. We provide said analytic bound on $d(\tau, \epsilon)$ in the following Theorem.

\begin{lemma}[Polynomial degree bound via Kapteyn] \label{lem:degree}
Let $\tau \in \mathbb{R}_+$, $\epsilon \in (0,1/2)$, and define 
\begin{equation}
    \theta_{0}(\tau, \epsilon) := \left [\frac{3}{\tau}\ln\left (\frac{2}{\epsilon }\right )\right ]^{1/3}.
\end{equation}
Then choosing the Jacobi-Anger truncation degree
    \begin{equation}
        d(\tau, \epsilon)= \left\lceil \tau \cosh \left ( \left [\frac{3}{\tau}\ln\left (\frac{2}{\epsilon (1-e^{-\theta_{0}(\tau, \epsilon)})}\right )\right ]^{1/3} \right )\right \rceil
    \end{equation}
    suffices to approximate the function $e^{i\tau x}$ to error $\epsilon$, where $\epsilon$ is an upper bound on the maximum deviation $\mathcal T_\epsilon(\tau,d)$ defined in Equation \eqref{eq:max_dev}. This choice of $d(\tau, \epsilon)$ holds $\forall \, d > \tau$.
\end{lemma}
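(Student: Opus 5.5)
The plan is to reparametrize the bound of Lemma~\ref{lem:error} by $\theta := \arccosh(d/\tau)$, i.e.\ $d = \tau\cosh\theta$. With this substitution $\sqrt{d^2-\tau^2} = \tau\sinh\theta$, so the exponent becomes
\begin{equation*}
-d\,\arccosh(d/\tau) + \sqrt{d^2-\tau^2} = -\tau\left(\theta\cosh\theta - \sinh\theta\right) =: -\tau g(\theta),
\end{equation*}
and the bound of Lemma~\ref{lem:error} reads $\mathcal T_\epsilon(\tau,d) \le 2e^{-\tau g(\theta)}/(1-e^{-\theta})$. The right-hand side is monotonically decreasing in $\theta$, hence in $d$. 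This is useful because the ceiling only increases $d$ beyond $\tau\cosh\theta^\ast$, where $\theta^\ast$ denotes the bracketed quantity in the statement, so it suffices to verify the bound at $\theta=\theta^\ast$. Strictly, Lemma~\ref{lem:error} needs integer $d>\tau$. I will therefore apply it at the integer $d=\lceil\tau\cosh\theta^\ast\rceil$ and use monotonicity to pass back to $\theta^\ast$.

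The key linearization is the elementary inequality $g(\theta) \ge \theta^3/3$ for $\theta\ge 0$. It follows from $g(0)=0$ and $g'(\theta)=\theta\sinh\theta \ge \theta^2$, which in turn holds since $\sinh\theta\ge\theta$. Equivalently, it is the Taylor series $g(\theta)=\sum_{n\ge1}\theta^{2n+1}\left(\tfrac{1}{(2n)!}-\tfrac{1}{(2n+1)!}\right)$, whose terms are all nonnegative. It gives
\begin{equation*}
\mathcal T_\epsilon(\tau,d) \le \frac{2e^{-\tau\theta^3/3}}{1-e^{-\theta}}.
\end{equation*}
Choosing $\theta=\theta^\ast$ with $\tau(\theta^\ast)^3/3 = \ln\!\big(2/(\epsilon(1-e^{-\theta_0}))\big)$ makes the right-hand side equal to $\epsilon\,(1-e^{-\theta_0})/(1-e^{-\theta^\ast})$.

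It remains to show $\theta^\ast\ge\theta_0$, so that $1-e^{-\theta^\ast}\ge 1-e^{-\theta_0}$ and the bound is at most $\epsilon$. This holds because $1-e^{-\theta_0}\in(0,1)$, so the logarithm defining $\theta^\ast$ is at least $\ln(2/\epsilon)$, which defines $\theta_0$. The condition $\epsilon<1/2$ guarantees $\ln(2/\epsilon)>0$, so $\theta_0>0$ and $d>\tau$, as Lemma~\ref{lem:error} requires.

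The main subtle point is this circularity. The denominator $1-e^{-\theta}$ depends on the unknown $\theta$, and it is resolved by the one-step fixed-point bootstrap from $\theta_0$, using monotonicity in both directions. The asymptotic expansion in Theorem~\ref{thm:hamsim} then follows from $\cosh x = 1 + x^2/2 + O(x^4)$, but it is not needed for the lemma itself.
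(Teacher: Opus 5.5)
Your proposal is correct and follows essentially the paper's route. Both arguments use the substitution $\theta=\arccosh(d/\tau)$, the linearization $\theta\cosh\theta-\sinh\theta\ge\theta^3/3$, and a one-step bootstrap from $\theta_0$. The only difference is in the last step: you check directly that your $\theta^\ast$ satisfies $\theta^\ast>\theta_0$, hence $1-e^{-\theta^\ast}\ge 1-e^{-\theta_0}$, and you handle the ceiling by monotonicity in $\theta$; this skips the paper's existence-and-uniqueness argument for the fixed point $\theta^\star$ of $f$ and is, if anything, slightly cleaner.
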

\begin{proof}
To linearize this expression, let 
\begin{equation}
    \theta := \arccosh(d/\tau), \quad \theta>0
\end{equation}
and set $d = \tau \cosh(\theta)$. Subbing into Equation \eqref{eq:transcendental}
\begin{align}
    \mathcal{T}_\epsilon(\tau, d) \leq 2\frac{e^{-d \, \arccosh(d/\tau)+\sqrt{d^2 - \tau^2}}}{1-e^{- \arccosh \left (\frac{d}{\tau}\right)}} = 2\frac{e^{-\tau[\theta \cosh(\theta) - \sinh(\theta)]}}{1-e^{-\theta}}.
\end{align}
Now Taylor expanding the argument of the exponential in the numerator we have $\theta \cosh(\theta) - \sinh(\theta) = \sum_{j=1}^\infty \frac{2j \, \theta^{2j+1}}{(2j+1)!} \geq \frac{\theta^3}{3}$, since the series is strictly positive. This in turn upper bounds the exponential and yields 

\begin{align}
    2\frac{e^{-\tau[\theta \cosh(\theta) - \sinh(\theta)]}}{1-e^{-\theta}} \leq 2\frac{e^{-\tau \theta^3/3}}{1-e^{-\theta}}.
\end{align}
Now enforce that this is bounded by $\epsilon$:

\begin{align}
     &2 \,\frac{e^{-\tau \theta^3/3}}{1-e^{-\theta}} \leq \epsilon \\
     & \theta \geq \left [\frac{3}{\tau}\ln\left (\frac{2}{\epsilon (1-e^{-\theta})}\right )\right ]^{1/3}. 
\end{align}
This is still not solvable for $\theta$, however, we can proceed by showing that there exists a unique fixed point that saturates this inequality, and find an upper bound on this fixed point that still satisfies the inequality. First define 
\begin{equation}
    f(\theta):= \left [\frac{3}{\tau}\ln\left (\frac{2}{\epsilon (1-e^{-\theta})}\right )\right ]^{1/3},
\end{equation}
and note that $f(\theta)$ is strictly decreasing and strictly positive on the domain $\theta\in (0, \infty)$. On the other hand, $\theta$ is strictly increasing and positive on this domain. In the limit $\theta \to 0^+$ we have $\theta <f(\theta)$ and in the limit $\theta \to \infty$ we have $\theta > f(\theta)$ which holds $\forall \, \tau, \epsilon$ as defined in the Theorem statement. These conditions imply a unique fixed point $\theta^\star$ such that $f(\theta^\star) = \theta^\star$. Now if we find a lower bound $\theta_0$ such that $ \theta \geq \theta^\star \geq \theta_0$ (since the fixed point is the smallest such $\theta$ that satisfies the inequality by the monotonicity and positivity of the functions) to upper-bound the fixed point, we have that $\theta \geq f(\theta_0)$, thus satisfying the inequality. This is true because $f(\theta)$ is a strictly decreasing function, and applying it reverses the inequality like so: $\theta\geq \theta^\star \implies f(\theta^\star)\leq f(\theta_0)$. Therefore, let 
\begin{equation}
    \theta_0 := \lim_{\theta \to \infty} f(\theta) = \left [\frac{3}{\tau}\ln\left (\frac{2}{\epsilon }\right )\right ]^{1/3},
\end{equation}
and using the property $\theta \geq f(\theta_0)$ gives 
\begin{equation}
    \theta \geq  \left [\frac{3}{\tau}\ln\left (\frac{2}{\epsilon (1-e^{-\theta_0})}\right )\right ]^{1/3},
\end{equation}
$\forall \, \theta \geq \theta^\star$, where $\theta^\star$ is the fixed point and the minimum $\theta$ that satisfies the inequality. Now substituting $\theta = \arccosh(d/\tau)$ and solving for $d$ gives 
\begin{equation}
    d \geq \tau \cosh \left ( \left [\frac{3}{\tau}\ln\left (\frac{2}{\epsilon (1-e^{-\theta_{0}})}\right )\right ]^{1/3} \right ).
\end{equation}
Enforcing $d$ to be an integer and choosing the $d$ that saturates the inequality completes the proof. 
\end{proof}
\noindent
We remark that this strategy can be recursed to achieve increasingly accurate approximations of the fixed point by iteratively substituting $\theta_1, \theta_2...$; however, we find that numerically this makes no practical difference in the standard practical regime where $\tau \gg \log(1/\epsilon)$. Theorem \ref{thm:hamsim} then immediately follows from the Lemma by applying the art of \cite{berry2024doubling}.\\

The results so far are loose by a logarithmic factor and can be further tightened via replacing Kapteyn's inequality with Watson's inequality [Eq. 9, Ref. \cite{watson1922treatise} pg. 255]: 
\begin{equation} \label{eq:watson}
    J_\nu(\nu z) \leq \frac{e^{-\nu F(z)}}{(1-z^2)^{1/4}\,(2\pi\nu)^{1/2}}, \qquad F(z) = \ln\left(\frac{1+\sqrt{1-z^2}}{z}\right) - \sqrt{1-z^2},
\end{equation}
which holds for all $\nu > 0 $ and $0<z\leq 1$. For the case where the Bessel degree is substantially larger than its argument (which is the case of the terms in the tailbound since $k\to \infty$), Watson's inequality is significantly tighter than Kapteyn's due to the additional terms in the denominator. This bound has been previously utilized to derive an asymptotic form of the Bessel tail error in 3-dimensional scattering theory \cite{carayol2004error}. Now, setting $\nu = k$ and $z=\tau/k$ (such that $k\geq \tau$), we have $k F(\tau/k) = k\arccosh(k/\tau) - \sqrt{k^2-\tau^2}$ and $(1-z^2)^{1/4}(2\pi k)^{1/2} = (2\pi \sqrt{k^2-\tau^2})^{1/2}$, so that Equation \eqref{eq:watson} bounds the tail term-wise
\begin{equation} \label{eq:watson_term}
    2\sum_{k=d}^\infty|J_k(\tau)| \leq \sum_{k=d}^\infty\frac{2}{(2\pi \sqrt{k^2-\tau^2})^{1/2}}\, e^{-k\,\arccosh(k/\tau)+\sqrt{k^2-\tau^2}}.
\end{equation}
Since $k\geq \tau$ for all indices in the sum, the leading term is strictly decreasing and the sum can be trivially upper-bounded by taking its value at $k=d$ giving 
\begin{equation} \label{eq:improved_tail}
    2\sum_{k=d}^\infty|J_k(\tau)| \leq \frac{2}{(2\pi \sqrt{d^2-\tau^2})^{1/2}}\,\sum_{k=d}^\infty  e^{-k\,\arccosh(k/\tau)+\sqrt{k^2-\tau^2}} = \frac{2}{(2\pi \sqrt{d^2-\tau^2})^{1/2}}\, \sum_{k=d}^\infty K_\tau(k),
\end{equation}
where $K_\tau(k)$ is the Kapteyn factor defined in Equation \eqref{eq:kapetyn_integral}. We can now use the results of the Lemma \ref{lem:error} and Lemma \ref{lem:degree} to prove a new tighter bound in Lemma \ref{lem:watson_degree}. 
\begin{lemma}[Polynomial degree bound via Watson] \label{lem:watson_degree}
Let $\tau \in \mathbb{R}_+$, $\epsilon \in (0,1/2)$, and define 
\begin{equation}
    \theta_{0} := \left[\frac{3}{\tau}\ln\left(\frac{2}{\epsilon}\right)\right]^{1/3}, \qquad
    \theta_{\rm up} := \left[\frac{3}{\tau}\ln\left(\frac{2}{\epsilon(1-e^{-\theta_0})}\right)\right]^{1/3}, \qquad
    \theta_{\rm lo} := \left [\frac{3}{\tau}\ln\left (\frac{2}{\epsilon (1-e^{-\theta_{\rm up}}) \sqrt{2\pi \tau \sinh(\theta_{\rm up})}}\right )\right ]^{1/3}.
\end{equation}
Then choosing the Jacobi-Anger truncation degree
    \begin{equation}
        d(\tau, \epsilon)= \left\lceil \tau \cosh \left (\left [\frac{3}{\tau}\ln\left (\frac{2}{\epsilon (1-e^{-\theta_{\rm lo}}) \sqrt{2\pi \tau \sinh(\theta_{\rm lo})}}\right )\right ]^{1/3} \right )\right \rceil
    \end{equation}
    suffices to approximate the function $e^{i\tau x}$ to error $\epsilon$, where $\epsilon$ is an upper bound on the maximum deviation $\mathcal T_\epsilon(\tau,d)$ defined in Equation \eqref{eq:max_dev}. This choice of $d(\tau, \epsilon)$ holds $\forall \, d\geq \sqrt{\tau^2+1/4\pi^2}$.
\end{lemma}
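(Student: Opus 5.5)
Starting from the Watson tail bound \eqref{eq:improved_tail}, I would reuse the geometric majorant from the proof of Lemma \ref{lem:error}. Since $K_\tau(k+1)/K_\tau(k)\le e^{-\arccosh(d/\tau)}$ for $k\ge d$, I get
\begin{equation}
\mathcal T_\epsilon(\tau,d)\le \frac{2}{\sqrt{2\pi\sqrt{d^2-\tau^2}}}\,\frac{K_\tau(d)}{1-e^{-\arccosh(d/\tau)}} .
\end{equation}
Note the sum in \eqref{eq:improved_tail} starts at $k=d$; the truncation in \eqref{eq:max_dev} stops at $d-1$, so this is exactly the tail. With $\theta=\arccosh(d/\tau)$ I have $\sqrt{d^2-\tau^2}=\tau\sinh\theta$. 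The same substitution as in Lemma \ref{lem:degree}, together with $\theta\cosh\theta-\sinh\theta\ge\theta^3/3$, gives
\begin{equation}
\mathcal T_\epsilon\le g(\theta):=\frac{2\,e^{-\tau\theta^3/3}}{(1-e^{-\theta})\sqrt{2\pi\tau\sinh\theta}} .
\end{equation}
Setting $g(\theta)\le\epsilon$ is equivalent to $\theta\ge h(\theta)$, where
\begin{equation}
h(\theta):=\left[\frac{3}{\tau}\ln\left(\frac{2}{\epsilon(1-e^{-\theta})\sqrt{2\pi\tau\sinh\theta}}\right)\right]^{1/3}.
\end{equation}
This holds provided the log argument is at least $1$. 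If it is not, then $g(\theta)\le\epsilon$ trivially, because $e^{-\tau\theta^3/3}\le1$.

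Next comes the fixed-point step. Now $h$ is strictly decreasing on $(0,\infty)$, since both $1-e^{-\theta}$ and $\sinh\theta$ increase. So, as in Lemma \ref{lem:degree}, any $\theta$ with $\theta\ge h(\theta')$ for some $\theta'\le\theta$ satisfies $\theta\ge h(\theta)$. I would take $\theta=h(\theta_{\rm lo})$, which is exactly the $\theta$ appearing in the stated $d$. It then suffices to show $\theta_{\rm lo}\le h(\theta_{\rm lo})$, i.e.\ that $\theta_{\rm lo}$ lies below the fixed point $\theta^\star$ of $h$.

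To get this I would use a sandwich. Let $f$ be the Kapteyn map of Lemma \ref{lem:degree}, whose fixed point $\theta^\star_K$ satisfies $\theta_{\rm up}=f(\theta_0)\ge\theta^\star_K$. I would then argue that $h\le f$ whenever $2\pi\tau\sinh\theta\ge1$, which forces $\theta^\star\le\theta^\star_K\le\theta_{\rm up}$. Applying the decreasing map $h$ gives $\theta_{\rm lo}=h(\theta_{\rm up})\le h(\theta^\star)=\theta^\star$. This is the required inequality.

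The hypothesis $d\ge\sqrt{\tau^2+1/4\pi^2}$ is precisely what makes the sandwich work. It is equivalent to $\sqrt{d^2-\tau^2}\ge 1/(2\pi)$, i.e.\ $2\pi\tau\sinh\theta\ge1$ at the chosen $\theta$. In that regime the Watson prefactor is at most $1$, and the comparison $h\le f$ holds.

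The main obstacle is exactly this ordering $\theta^\star\le\theta_{\rm up}$. The comparison $h\le f$ is only valid where $2\pi\tau\sinh\theta\ge1$, and the fixed point might a priori fall in the region where it fails. If that happens, $\sinh\theta^\star<1/(2\pi\tau)$, so $\theta^\star$ is tiny. I would then handle this case directly by monotonicity: any $\theta$ meeting the hypothesis already exceeds $\theta^\star$, so the needed inequality holds anyway.

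Two smaller points also need checking: that the logarithm in $h$ remains positive (handled by the trivial case above), and that $\theta\mapsto\theta-h(\theta)$ crosses zero exactly once. Finally, rounding $d=\tau\cosh\theta$ up only increases $\theta$, which preserves $\theta\ge h(\theta)$ by the monotonicity of $h$.
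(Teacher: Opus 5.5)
Your proposal is correct and is essentially the paper's proof. It uses the Watson prefactor times the geometric majorant from Lemma~\ref{lem:error}, the bound $\theta\cosh\theta-\sinh\theta\ge\theta^3/3$, and the decreasing fixed-point map (your $h$ is the paper's $g$). The key step is the same sandwich, $\theta_{\rm lo}=h(\theta_{\rm up})\le h(\theta^\star)=\theta^\star$, obtained from $\theta^\star\le\theta^\star_K\le\theta_{\rm up}$.

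Your case split on whether the Watson fixed point lies above the crossover $2\pi\tau\sinh\theta=1$ is a bit more careful than the paper. The paper simply asserts that the Watson fixed point is at most the Kapteyn one because $g\le f$ on $[\theta_{\rm poi},\infty)$; in fact $\theta^\star_K>\theta_0>\theta_{\rm poi}$ for all admissible $\tau,\epsilon$, so your extra case never arises. Like the paper, however, you tacitly need $\theta_{\rm lo}>0$ for the stated $d$ to be defined at all, and your trivial-case remark about the logarithm does not cover this.
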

\begin{proof}
    Our starting point is Equation \eqref{eq:improved_tail}. We know from Lemma \ref{lem:error} that
    $\sum_{k=d}^\infty K_\tau(k) \leq \frac{\exp\left ({-d \, \arccosh(d/\tau)+\sqrt{d^2 - \tau^2}}\right)}{1-\exp\left({- \arccosh \left (\frac{d}{\tau}\right)}\right)}$, thus substituting gives 
    \begin{equation}
        2\sum_{k=d}^\infty|J_k(\tau)|\leq \frac{2}{(2\pi \sqrt{d^2-\tau^2})^{1/2}}\,\frac{e^{-d \, \arccosh(d/\tau)+\sqrt{d^2 - \tau^2}}}{1-e^{- \arccosh \left (\frac{d}{\tau}\right)}}.
    \end{equation} 
    Once again making the substitution $\theta = \arccosh(d/\tau)$ this becomes 
    \begin{equation}
        2\sum_{k=d}^\infty|J_k(\tau)|\leq \frac{2}{\sqrt{(2\pi \tau \sinh(\theta))}}\,\frac{e^{-\tau[\theta \cosh(\theta) - \sinh(\theta)]}}{1-e^{-\theta}} \leq \frac{2}{\sqrt{(2\pi \tau \sinh(\theta))}}\,\frac{e^{-\tau \theta^3/3}}{1-e^{-\theta}} \leq \epsilon,
    \end{equation}
    where we have once again used that $\theta \cosh(\theta) - \sinh(\theta) = \sum_{j=1}^\infty \frac{2j \, \theta^{2j+1}}{(2j+1)!} \geq \frac{\theta^3}{3}$. Rearranging for $\theta$ we obtain a similar fixed point condition to that in the proof of Lemma \ref{lem:degree}:
    \begin{equation} \label{eq:fixed_point}
        \frac{2}{\sqrt{(2\pi \tau \sinh(\theta))}}\,\frac{e^{-\tau \theta^3/3}}{1-e^{-\theta}} \leq \epsilon \iff \theta \geq  \left [\frac{3}{\tau}\ln\left (\frac{2}{\epsilon (1-e^{-\theta}) \sqrt{2\pi \tau \sinh(\theta)}}\right )\right ]^{1/3}  := g(\theta). 
    \end{equation}
    Since $\theta\in (0, \infty)$ we have that both sides are strictly positive and $g(\theta)$ is strictly decreasing in $\theta$, further implying the existence of a unique fixed point $\phi^\star$ (different than the fixed point $\theta^\star$ of Lemma \ref{lem:degree}) where the fixed point is the smallest possible point that satisfies the inequality. To understand the relationship between these fixed points observe that $g(\theta) = f(\theta)$ when ${\sqrt{2\pi \tau \sinh(\theta)}} = 1$ (as they only differ by this factor in the logarithm). Rearranging this expression and substituting $\theta = \arccosh(d/\tau)$ we obtain $\cosh(\arcsinh(1/2\pi\tau)) = d/\tau$. Using the hyperbolic identity $\cosh(\arcsinh(x)) = \sqrt{1+x^2}$, we simplify the expression to obtain that $d=\sqrt{\tau^2+1/4\pi^2}$ at the point of intersection $\theta_{\rm poi}$ of these functions. Therefore, $g(\theta) \leq f(\theta) \, \forall \, d\geq \sqrt{\tau^2+1/4\pi^2}$. Requiring $d \in \mathbb{N}$ and $d>\tau$ will satisfy this requirement in almost all practical cases, however, we demand $d\geq \sqrt{\tau^2+1/4\pi^2}$ hold in the theorem statement to be rigorous. Enforcing this inequality and noting that $g(\theta)\leq f(\theta)\, \forall \, \theta \geq \theta_{\rm poi}$, we necessarily have $\phi^\star \leq \theta^\star$ as expected from the tighter nature of Watson's bound.
    
    From Lemma \ref{lem:degree}, if we define the upper bound $\theta_{\rm up}: = \left [\frac{3}{\tau}\ln\left (\frac{2}{\epsilon (1-e^{-\theta_0})}\right )\right ]^{1/3} \geq \theta^\star \geq \phi^\star$  we proved that 
    \begin{equation}
        \frac{2}{\sqrt{(2\pi \tau \sinh(\theta_{\rm up}))}}\,\frac{e^{-\tau \theta_{\rm up}^3/3}}{1-e^{-\theta_{\rm up}}} \leq 2\,\frac{e^{-\tau \theta_{\rm up}^3/3}}{1-e^{-\theta_{\rm up}}} \leq  \epsilon,
    \end{equation}
    where the LHS is trivially true. By the fixed point condition, this implies that $\theta_{\rm up} \geq \phi^\star$ since $\phi^\star$ is the smallest value that satisfies the inequality. By definition of the fixed point we have 
    \begin{equation}
        \phi^\star =g(\phi^\star) = \left [\frac{3}{\tau}\ln\left (\frac{2}{\epsilon (1-e^{-\phi^\star}) \sqrt{2\pi \tau \sinh(\phi^\star)}}\right )\right ]^{1/3} \geq \left [\frac{3}{\tau}\ln\left (\frac{2}{\epsilon (1-e^{-\theta_{\rm up}}) \sqrt{2\pi \tau \sinh(\theta_{\rm up})}}\right )\right ]^{1/3} := \theta_{\rm lo},
    \end{equation}
    where a lower bound is obtained by evaluating the decreasing function $g(\theta_{\rm up})$ at the larger argument ($\theta_{\rm up} \geq \phi^\star$). Since $\theta_{\rm lo} \leq \phi^\star \implies \theta \geq g(\theta_{\rm lo}) \geq \phi^\star$, we can solve the inequality $\theta \geq g(\theta_{\rm lo})$ since $\theta_{\rm lo}$ is independent of $\theta$ itself. Using Equation \eqref{eq:fixed_point} we now solve for $d$ and obtain:

    \begin{equation}
        d \geq  \tau \cosh \left (\left [\frac{3}{\tau}\ln\left (\frac{2}{\epsilon (1-e^{-\theta_{\rm lo}}) \sqrt{2\pi \tau \sinh(\theta_{\rm lo})}}\right )\right ]^{1/3} \right )
    \end{equation}
    Enforcing $d$ to be an integer and choosing the $d$ that saturates the inequality completes the proof.
\end{proof}
\noindent Theorem \ref{thm:hamsim2} then immediately follows from the Lemma by applying the art of \cite{berry2024doubling}.

\bibliography{bib}

\end{document}